\definecolor{blue25}{rgb}{0,0,0.25}
\newcommand{\MakeBig}{\rule[-.2cm]{0cm}{0.4cm}}
\newcommand{\sep}[1]{\,\left|\, {#1} \MakeBig\right.}
\newcommand{\pth}[2][\!]{#1\left({#2}\right)}
\newtheorem{theorem}{Theorem}[section]
\newtheorem{lemma}[theorem]{Lemma}
\newtheorem{claim}[theorem]{Claim}
{\theorembodyfont{\rm} \newtheorem{defn}[theorem]{Definition}}
\newenvironment{proof}{{\em Proof:}}{\hfill{\hfill\rule{2mm}{2mm}}}
\newcommand{\brc}[1]{\left\{ {#1} \right\}}
\newcommand{\List}[1]{\mathcal{L}_{#1}}
\newcommand{\LSet}{\mathsf{Z}}
\newcommand{\Arr}{\EuScript{S}}
\newcommand{\ArrB}{\EuScript{T}}
\newcommand{\KMed}{$k$-range-medians\xspace}
\newcommand{\KMedQ}{median range queries\xspace}
\newcommand{\clmlab}[1]{\label{claim:#1}}
\newcommand{\clmref}[1]{Claim~\ref{claim:#1}}
\newcommand{\seclab}[1]{\label{sec:#1}}
\newcommand{\secref}[1]{Section~\ref{sec:#1}}
\newcommand{\apndref}[1]{Appendix~\ref{sec:#1}}
\newcommand{\lemlab}[1]{\label{lemma:#1}}
\newcommand{\lemref}[1]{Lemma~\ref{lemma:#1}}
\newcommand{\thmlab}[1]{\label{thm:#1}}
\newcommand{\thmref}[1]{Theorem~\ref{thm:#1}}
\newcommand{\rank}{\mathop{\mathrm{rank}}}
\newcommand{\ceil}[1]{\left\lceil {#1} \right\rceil}
\newcommand{\floor}[1]{\left\lfloor {#1} \right\rfloor}
\newcommand{\ncurr}{n_{\mathrm{curr}}}
\newcommand{\cardin}[1]{\left| {#1} \right|}
\newcommand{\junk}[1]{}
\renewcommand{\th}{th\xspace}
\newcommand{\eps}{{\varepsilon}}%
\newcommand{\atgen}{\symbol{'100}}
\newcommand{\SarielThanks}[1]{\thanks{Department of Computer
      Science; 
      University of Illinois; 
      201 N. Goodwin Avenue;
      Urbana, IL, 61801, USA;
      {\tt sariel\atgen{}uiuc.edu}; {\tt
         \url{http://www.uiuc.edu/\string~sariel/}.} #1}}
\newcommand{\MedianAlg}{\Algorithm{MedianAlg}\xspace}
\newcommand{\Algorithm}[1]{{\textcolor[named]{RedViolet}{\texttt{\bf{#1}}}}}
\newcommand{\etal}{\textit{et~al.}\xspace}
\begin{document}

\title{Range Medians}%
\author{Sariel Har-Peled\SarielThanks{}%
   \and%
   S.  Muthukrishnan%
   \thanks{Google Inc., 76 9th Av, 4th Fl., New York,
      NY, 10011. {\tt{muthu\atgen{}google.com}}}}

\date{\today}

\maketitle

\begin{abstract}
    We study a generalization of the classical median finding problem
    to batched query case: given an array of unsorted $n$ items and
    $k$ (not necessarily disjoint) intervals in the array, the goal is
    to determine the median in {\em each} of the intervals in the
    array. We give an algorithm that uses $O(n\log n + k\log k \log
    n)$ comparisons and show a lower bound of $\Omega(n\log k)$
    comparisons for this problem. This is optimal for $k=O(n/\log n)$.
\end{abstract}


\section{Introduction}

The classical median finding problem is to find the {\em median} item,
that is, the item of rank $\lceil n/2\rceil$ in an unsorted array of
size $n$.  We focus on the comparison model, where items in the array
can be compared only using comparisons, and we count the number of
comparisons performed by any algorithm~\footnote{In the algorithms
   discussed in this paper, the computation performed beyond the
   comparisons will be linear in the number of comparisons.}.  It is
known since the 70's that this problem can be solved using $O(n)$
comparisons in the worst case~\cite{bfprt-tbs-73}.  Later research
\cite{bj-fmrtc-85,spp-fm-76,dz-sm-99,dz-msrtc-01} showed that the
number of comparisons needed for solving the median finding algorithm
is between $(2+\eps)n$ and $2.95n$ in the worst case (in the
deterministic case). Closing this gap for a deterministic algorithm is
an open problem, but surprisingly, one can find the median using $1.5n
+ o(n)$ comparisons using a randomized algorithm \cite{mr-ra-95}.

We study the following generalization of the median problem.

\paragraph{The \KMed Problem.}  
The input is an unsorted array $\Arr$ with $n$ entries. A sequence of
$k$ queries $Q_1, \ldots, Q_k$ is provided. A query $Q_j = [l_j, r_j]$
is an {\em interval} of the array, and the output is $x_1, \ldots,
x_k$, where
\[
x_j = {\rm median} \brc{ \MakeBig \Arr[ l_j], \Arr[ l_j +1], \ldots, \Arr[r_j]}
\]
for $j=1, \ldots ,k$. We refer to this as the {\em \KMed problem}.
The problem is to build a data-structure for $\Arr$ such that it can
answer this kind of queries quickly. Notice that the intervals are possibly
overlapping.

\medskip 

This is the {\em interval} version of the classical median finding
problem, and it is interesting on its own merit. In addition, there
are many motivating scenarios where they arise.

\paragraph{Examples.}
A motivation arises in analyzing logs of internet advertisements (aka
ads).  We have the log of clicks on ads on the internet: each record
gives the time of the click as well as the varying price paid by the
advertiser for the click, and the log is arranged in time-indexed
order.  Then, $\Arr[i]$ is the price for the $i$th click.  Any given
advertiser runs several ad campaigns simultaneously spread over
different intervals of time.  The advertiser then wishes to compare
his cost to the general ad market during the period his campaigns ran,
and a typical comparison is to the median price paid for clicks during
those time intervals. This yields an instance of the \KMed problem,
for possibly intersecting set of intervals.

As another example, consider IP networks where one collects what are
known as SNMP logs: for each link that connects two routers, one
collects the total bytes sent on that link in each fixed length
duration like say 5 minutes~\cite{kmz-cbmdq-03}. Then, 
$\Arr[i]$ is the number of bytes sent on that link in the $i$th time
duration. A traffic analyst is interested in finding the median value
of the traffic level within a \emph{specific} time window such as a
week, office hours, or weekends, or the median within \emph{each} such
time window.  Equally, the analyst is sometimes interested in median
traffic levels during specific external events such as the time
duration when an attack happened or a new network routing strategy was
tested.

There are other attributes in addition to time where applications may
solve range median problems. For example, $\Arr[i]$ may be the total
value of real estate sold in postal zipcode area $i$ arranged in
sorted order, and an analyst may be interested in the median value for
a borough or a city represented by a consecutive set of zipcodes.

\medskip
One can ask similar interval versions of other problems too, for
example, the median may be replaced by (say) the maximum, minimum,
mode or even the sum.  
\begin{itemize}
    \item For sum, a trivial $O(n)$ preprocessing to compute all the
    prefix sums $P[j] = \sum_{i\le j} \Arr[i]$ suffices to answer any
    interval query $Q_j=[l_j,r_j]$ in optimal $O(1)$ time using
    $P[r_j]-P[l_j-1]$.

    \item If the summation operator (i.e., $\sum$) is replaced by a
    semigroup operator (where the subtraction operator is absent),
    then $\Arr$ can be preprocessed in $O(nk)$ space and time and each
    query can be answered in $O\pth{ \alpha_k(n) }$ where $\alpha_k$
    is a slow growing function~\cite{y-sttar-82}, and this is optimal
    under general semigroup conditions~\cite{y-cmps-85}.

    \item For the special cases of the semigroup operator such as the
    maximum or minimum, a somewhat nontrivial algorithm is needed to
    get same optimal bounds as for the $\sum$ case (see for
    example~\cite{bf-laps-04}).
\end{itemize}
The median operator is not a semigroup operator and presents a more
difficult problem.  The only prior results we know are obtained by
using the various tradeoffs shown in~\cite{kms-rmrmq-05}.  For the
case when $k=1$, the interesting tradeoffs for preprocessing time and
query times are respectively, roughly, $O(n\log ^2 n)$ and $O(\log
n)$, or $O(n^2)$ and $O(1)$, or $O(n)$ and $O(n^{\varepsilon})$ for
constant fraction $\varepsilon$~\cite{kms-rmrmq-05}.  These bounds for
individual queries can be directly applied to each of the $k$ interval
queries in our problem, resulting in a multiplicative $k$ factor in
the query complexity. In particular, the work of Krizanc \etal
\cite{kms-rmrmq-05} implies an $O\pth{ n \log^2 n + k \log n}$ time
algorithm for our problem.

Our main result is as follows.
\begin{theorem}
    There is a deterministic algorithm to solve the \KMed problem in
    $O( n \log k + k \log k \log n )$ time. Furthermore, in the
    comparison model, any algorithm that solves this problem requires
    $\Omega(n \log k )$ comparisons. 

    \thmlab{main}
\end{theorem}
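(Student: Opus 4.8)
\emph{Upper bound: the algorithm.} I would solve the \KMed problem by a single recursive procedure, \MedianAlg, operating on pairs $(E,\mathcal Q)$, where $E$ is a contiguous sublist of $\Arr$ \emph{kept in original index order} and $\mathcal Q$ is a set of requests, each a triple $(a,b,\rho)$ meaning ``report the $\rho$-th smallest of $E[a],\dots,E[b]$''. At the top level $E=\Arr$ and $\mathcal Q$ contains, for each input query $Q_j=[l_j,r_j]$, the triple $(l_j,r_j,\ceil{(r_j-l_j+1)/2})$, so the root call is exactly the \KMed instance. At a node: if $\cardin{\mathcal Q}\le 1$ (or $\cardin{E}\le 1$), answer each remaining request directly by a deterministic linear-time selection~\cite{bfprt-tbs-73} inside $E$, in $O(\cardin{E})$ time, and return. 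Otherwise compute $m=\mathrm{median}(E)$ in $O(\cardin{E})$ time, split $E$ \emph{stably} (ties broken by index) into $E_{\le}$ and $E_{>}$ of sizes $\ceil{\cardin{E}/2}$ and $\floor{\cardin{E}/2}$, and build the prefix-count array $c[\cdot]$, where $c[i]$ is the number of indices $i'\le i$ with $E[i']\le m$. Using $c$, route every $(a,b,\rho)\in\mathcal Q$ in $O(1)$ time: with $c_{ab}=c[b]-c[a-1]$, the request becomes $(c[a-1]+1,\,c[b],\,\rho)$ for $E_{\le}$ if $\rho\le c_{ab}$, and $(a-c[a-1],\,b-c[b],\,\rho-c_{ab})$ for $E_{>}$ otherwise; a child inheriting no request is discarded, and we recurse on the (at most two) surviving children. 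Correctness is an immediate induction: among the elements of $E$ inside a request's slice, exactly $c_{ab}$ are $\le m$, so the sought order statistic lies on the indicated side with the indicated shifted rank. This is the interval analogue of the textbook method for extracting $t$ prescribed order statistics of one array.

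\emph{Upper bound: running time.} Charge $O(\cardin{E}+\cardin{\mathcal Q})$ to every recursion node. The key step is a level-by-level count. The element-sets appearing at recursion depth $d$ are pairwise disjoint subsets of $\Arr$ (children partition their parent), so their sizes sum to at most $n$; also each is of size at most $\ceil{n/2^{d}}$, and there are at most $\min(2^{d},k)$ such nodes, since every surviving node owns a request and the requests are partitioned across a level. Hence the total element-work at depth $d$ is $O(\min\{n,\,kn/2^{d}\})$, which is $O(n)$ for $d\le\log_2 k$ and decays geometrically for larger $d$; summing over the $O(\log n)$ levels gives $O(n\log k)$. Likewise at most $k$ requests occur over each level, for $O(k\log n)$ total. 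Therefore \MedianAlg runs in $O(n\log k + k\log n)=O(n\log k + k\log k\log n)$ time (for $k=1$ it degenerates to the classical $O(n)$ selection). The only mildly delicate point is maintaining the request endpoints across the partition, which the array $c$ handles; everything else is routine.

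\emph{Lower bound.} Assume $k\le n/4$ first (if $k$ is larger then $\log k=\Theta(\log n)$, and the argument below with $\floor{n/4}$ queries forces $\Omega(n\log n)=\Omega(n\log k)$). Set $N=\floor{n/2}$ and consider inputs $\Arr=B\cdot P$, where $B$ is an adversarial block of $N$ elements and $P$ is a block of $N$ values, each larger than every element of $B$. For $i=1,\dots,k$ let $t_i=\ceil{N/2+iN/(2k)}$, so that $t_1<\dots<t_k=N$ are $k$ ranks in $[N/2,N]$ with consecutive gaps $\Theta(N/k)$, and take the prefix query $Q_i=[1,\,2t_i]$. Since $t_i\ge N/2$, the prefix $\Arr[1],\dots,\Arr[2t_i]$ consists of all of $B$ together with $2t_i-N$ elements of $P$, hence its median (the element of rank $t_i$) is exactly the $t_i$-th smallest element of $B$. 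Thus any algorithm solving this \KMed instance determines the order statistics of ranks $t_1,\dots,t_k$ of the adversarial set $B$. Answering its $B$-versus-$B$ comparisons truthfully and its $P$-touching comparisons by the fixed (free) rule converts it into a valid comparison-based algorithm that multi-selects these $\Theta(N/k)$-spaced ranks from $B$, and such multiple selection is known to require $\Omega(N\log k)$ comparisons --- by the standard adversary argument, for $\Omega(N)$ of the elements the algorithm must in effect resolve which of the $k$ rank-bands it lies in, a $k$-way decision costing $\Omega(\log k)$ comparisons each. Hence the \KMed algorithm performs $\Omega(N\log k)=\Omega(n\log k)$ comparisons. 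The one place requiring care --- and, to my mind, the main obstacle of the lower bound --- is nailing down this last $\Omega(n\log k)$ bound for multiple selection (or citing the classical result).
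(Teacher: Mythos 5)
Your upper bound takes a genuinely different route from the paper, and it works --- in fact it proves a stronger bound. The paper pre-partitions $\Arr$ into at most $2k-1$ atomic intervals at the query endpoints, $u$-sorts each piece for $u=k^2$ (a relaxed sort keeping only $O(u)$ elements in final position), merges them up a balanced tree, and answers each query by a selection-from-$\ell$-sorted-arrays routine over the $O(\log k)$ canonical nodes, for $O(n\log k + k\log k\log n)$ total. You instead recursively partition around the median of the current piece and route each (interval, rank) request to one side via a stable split and prefix counts; the routing is correct because a stable split sends a contiguous index range to a contiguous index range in each child, and your level-by-level accounting ($\min(n,\,kn/2^d)$ element work plus $O(k)$ request work per level over $O(\log n)$ levels) is sound. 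This gives $O(n\log k + k\log n)$, which dominates the theorem's claimed bound and is optimal for a wider range of $k$. What the paper's heavier machinery buys is a reusable data structure that handles queries arriving online (amortized); your routing needs all requests up front, but the theorem as stated concerns the batched problem, so that is not a defect.

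The lower bound, however, has a real gap. Your reduction (pad $B$ with $N$ large elements and ask prefix-median queries $[1,2t_i]$ so the answers are the order statistics of $B$ at $k$ ranks spaced $\Theta(N/k)$ apart) is fine and closely parallels the paper's $\pm\infty$-padding reduction. But the entire content of the paper's Section~2 is the statement you defer to: that multiselecting $k$ well-spread ranks from an $N$-element set requires $\Omega(N\log k)$ comparisons. Your one-line justification --- that for $\Omega(N)$ elements the algorithm ``must in effect resolve which of the $k$ rank-bands it lies in'' --- assumes precisely the nontrivial point: the algorithm is only required to output the $k$ selected elements, not the band of every other element, so one must show it cannot terminate while two inputs consistent with all comparisons made so far disagree on some element's band. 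The paper proves this by interpolating linearly between two such inputs and observing that at some intermediate $t^*$ the disputed element passes through one of the target ranks while still satisfying every comparison performed, which would change the output; only then does the counting bound $\log\bigl(n!/(k!\,(\beta!)^k)\bigr)=\Omega(n\log k)$ apply. Either carry out such an adversary/interpolation argument or cite a specific multiselection lower bound; as written, the crux of the lower bound is asserted rather than proved.
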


The \KMed problem seems to be a fairly basic problem and it is worthwhile
to have tight bounds for it. In particular, $\Theta(n\log k)$ may not be the 
bound one suspects at first glance to be tight for this problem.
For $k=O(n/log n)$, our algorithm is optimal. It also improves 
\cite{kms-rmrmq-05} for $k=O(n)$.

The lower bound holds even if the set of intervals is
\emph{hierarchical}, that is, for any two intervals in the set, either
one of them is contained in the other, or they are disjoint. On the
other hand, the upper bound holds even if the queries arrive online,
in the amortized sense. Our algorithm uses {\em relaxed} sorting on
pieces of the array, where only a subset of items in a piece is in
their correct sorted location.  Relaxed sorting like this has been
used before for other problems, for example, see~\cite{AY}.

In the following, the $k$th element of a set $S$ (or element of rank
$k$) would refer to the $k$th smallest element in the set $S$.  For
simplicity, we assume the elements of $\Arr$ are all unique.

\section{The Lower Bound}

Recall that $\Arr$ is an unsorted array of $n$ elements.  Assume that
$n$ is a multiple of $k$.  Let $\Psi(n,k) = \brc{ \frac{i n}{k} \sep{
      i =1,\ldots, k}}$, for $n > k > 0$. We will say an element of
$\Arr$ is the \emph{$i$\th element} of $\Arr$ if its rank in $\Arr$ is
$i$.

\begin{claim}
    Any algorithm \MedianAlg that computes all the elements of rank in
    $\Psi(n,k)$ from $\Arr$ needs to perform $\Omega(n \log k)$
    comparisons in the worst case.

    \clmlab{lower:first}
\end{claim}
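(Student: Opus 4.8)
The plan is to run a comparison-tree (decision-tree) argument and reduce the bound to counting leaves. Model \MedianAlg as a binary tree in which each internal node queries a comparison between two entries of $\Arr$ and branches on the outcome, and each leaf outputs the $k$ array positions claimed to hold the elements of ranks $n/k, 2n/k, \ldots, n$. The worst-case number of comparisons equals the height of this tree, which is at least $\log_2$ of the number of leaves, so it suffices to exhibit at least $\binom{n}{n/k,\ldots,n/k} = \frac{n!}{((n/k)!)^k}$ distinct leaves and to check that the logarithm of this quantity is $\Omega(n\log k)$.

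The heart of the proof is the following claim: each leaf $\ell$ already determines the partition of the entries of $\Arr$ into the $k$ blocks $B_1,\ldots,B_k$, where $B_i$ is the set of entries whose rank lies in $((i-1)n/k,\, in/k]$. To see this, let $P_\ell$ be the partial order obtained as the transitive closure of all comparison outcomes along the root-to-leaf path to $\ell$. The key observation is that a total order on the entries reaches $\ell$ if and only if it is a linear extension of $P_\ell$; hence correctness of \MedianAlg forces the $i$-th output position $a_i$ to have rank exactly $in/k$ in \emph{every} linear extension of $P_\ell$. A standard fact about posets says an element has a fixed rank across all linear extensions of $P_\ell$ if and only if it is comparable, in $P_\ell$, to every other element (otherwise one can slide an incomparable element from just below it to just above it and change its rank). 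Thus each $a_i$ is comparable to everything, the down-set $D_i=\{x : x<_{P_\ell} a_i\}$ has size exactly $in/k-1$, and since the ranks increase we get $a_1<_{P_\ell}\cdots<_{P_\ell}a_k$ and $D_1\cup\{a_1\}\subseteq D_2\cup\{a_2\}\subseteq\cdots$. Setting $B_i=(D_i\cup\{a_i\})\setminus(D_{i-1}\cup\{a_{i-1}\})$ recovers the entire block partition from $\ell$ alone, and for any input that actually reaches $\ell$ these recovered blocks coincide with its true blocks.

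Since every ordered partition of the $n$ entries into $k$ blocks of size $n/k$ is realized by some input, the map sending an input to its block partition is onto a set of size $\binom{n}{n/k,\ldots,n/k}$ and factors through the leaf reached; hence the tree has at least $\binom{n}{n/k,\ldots,n/k}$ leaves. Finally, from the telescoping identity $\frac{n!}{((n/k)!)^k}=\prod_{i=1}^{k}\binom{in/k}{n/k}\ge\prod_{i=1}^{k} i^{\,n/k}=(k!)^{n/k}\ge (k/e)^{n}$, the height is at least $n\log_2(k/e)=\Omega(n\log k)$ once $k$ exceeds a suitable constant; for smaller (constant) $k$ one has $\Omega(n\log k)=\Omega(n)$, which already follows since the element of rank $n$ — the maximum — must be output and finding it needs $n-1$ comparisons. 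I expect the main obstacle to be the claim in the middle paragraph: making rigorous that merely \emph{outputting} the pivots forces \MedianAlg's comparison poset $P_\ell$ to resolve the whole block partition, i.e. the "comparable-to-everything" characterization together with the assertion that every linear extension of $P_\ell$ is a legitimate input reaching $\ell$. The remaining counting and Stirling estimates are routine.
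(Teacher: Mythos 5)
Your proof is correct, and at the top level it is the same information-theoretic argument as the paper's: both reduce the bound to showing that the algorithm must distinguish all $\Theta\bigl(n!/((n/k)!)^k\bigr)$ ways of partitioning the entries into the $k$ rank-blocks, and then take logarithms. Where you genuinely diverge is in the pivotal step --- proving that a correct algorithm must in fact resolve the entire block partition before it stops. You argue combinatorially on the comparison poset $P_\ell$ at a leaf: every linear extension of $P_\ell$ is realized by an input reaching that leaf, so each reported pivot must have the same rank in all linear extensions, which by the order-ideal characterization forces it to be comparable to every other element; the down-sets of the pivots then reconstruct the blocks. The paper instead runs an interpolation/perturbation argument: given two consistent labellings in which some element $\Arr[\alpha]$ carries different labels, it forms the convex combination $D(t)=(1-t)B+tC$ of two realizing inputs, notes that all comparisons made so far are preserved along the path (since $b_i<b_j$ and $c_i<c_j$ imply $d_i(t)<d_j(t)$), and finds a $t^*$ at which $d_\alpha(t^*)$ has rank in $\Psi(n,k)$, producing two consistent inputs with different outputs. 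Your poset route is the more standard and, to my mind, cleaner way to make this step rigorous --- it avoids the measure-zero perturbation needed to keep entries distinct along the homotopy, and it isolates exactly the fact being used (fixed rank in all linear extensions iff comparable to everything). What the paper's continuous argument buys is that it pinpoints a single witness element changing its label, which matches its bookkeeping in terms of ``labellings halved per comparison'' rather than leaf counting; the two bookkeeping styles are of course equivalent. Your closing estimate via the telescoping product $\prod_i\binom{in/k}{n/k}\ge(k!)^{n/k}$ is also a slicker route to $\Omega(n\log k)$ than the paper's Stirling computation, and your separate treatment of constant $k$ via the maximum is a correct (and necessary) footnote that the paper elides.
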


\begin{proof}
    Let $m_i = {i n}/{k}$, for $i=0,\ldots, k$.  An element would be
    \emph{labeled} $i$ if it is larger than the $m_{i-1}$\th element
    of $\Arr$ and smaller than the $m_i$\th element of $\Arr$ (note,
    that the $m_k$\th element of $\Arr$ is the largest element in
    $\Arr$). An element would be \emph{unlabeled} if its rank in
    $\Arr$ is in $\Psi(n,k)$.
    
    Note, that the output of the algorithm is the indices of the $k$
    unlabeled elements. We will argue that just computing these $k$
    numbers requires $\Omega(n \log k)$ time.

    Consider an execution of \MedianAlg on $\Arr$. We consider the
    comparison tree model, where the input travels down the decision
    tree from the root, at any vertex a comparison is being made, the
    and the input is directed either to the right or left child
    depending on the result of the comparison. 

    A labelling (at a vertex $v$ of the decision tree) is
    \emph{consistent} with the comparisons seen so far by the
    algorithm if there is an input with this labelling, such that it
    agrees with all the comparisons seen so far and it reaches $v$
    during the execution.  Let $\LSet$ be the set of labellings of
    $\Arr$ consistent with the comparisons seen so far at this vertex
    $v$.

    We claim that if $|\LSet| > 1$ then the algorithm can not yet
    terminate. Indeed, in such a case there are at least two different
    labellings that are consistent with the comparisons seen so
    far. If not all the labellings of $\LSet$ have the same set of $k$
    elements marked as unlabeled, then the algorithm has different
    output (i.e., the output is just the indices of the unlabeled
    elements), and as such the algorithm can not terminate.
    
    So, let $\Arr[\alpha]$ be an element that has two different labels
    in two labellings of $\LSet$.  There exists two distinct inputs $B
    = [b_1, \ldots, b_n]$ and $C= [c_1, \ldots, c_n]$ that realizes
    these two labellings.  Now consider the input $D(t) = [d_1(t),
    \ldots, d_n(t)]$, where $d_i(t) = b_i(1-t) + t c_i$, for
    $t\in[0,1]$ and $i=1,\ldots, n$. We can perturb the numbers $b_1,
    \ldots, b_n$ and $c_1,\ldots, c_n$ so that there is never a
    $t\in[0,1]$ for which three entries of $D(\cdot)$ are equal to
    each other (this can be guaranteed by adding random infinitesimal
    noise to each number, and observing that the probability of this
    bad event has measure zero). Note that $D(0) = B$ and $D(1) = C$.

    Furthermore, since for the inputs $B$ and $C$ our algorithm had
    reached the same node (i.e., $v$) in the decision tree, it holds
    that for all the comparisons the algorithm performed so far, it
    got exactly the same results for both inputs.

    Now, assume without loss of generality, that the label for
    $b_{\alpha}$ in $B$ is strictly smaller than the label for
    $c_{\alpha}$ in $C$.  Clearly, for some value of $t$ in this
    range, denoted by $t^*$, $d_\alpha(t)$ must be of rank in the set
    $\brc{m_1, \ldots, m_k}$. Indeed, as $t$ increases from $0$ to
    $1$, the rank of $d_\alpha(t)$ starts at the rank of $b_\alpha$ in
    $B$, and ends up with the rank of $c_\alpha$ in $C$. But $D(t^*)$
    agrees with all the comparisons seen by the algorithm so far
    (since if $b_i < b_j$ and $c_i < c_j$ then $d_i(t) <d_j(t) $, for
    $t \in [0,1]$). We conclude that the assignment that realizes
    $D(t^*)$ must leave $d_\alpha(t)$ unlabeled. Namely, the set
    $\LSet$ has two labellings with different sets of $k$ elements
    that are unlabeled, and as such the algorithm can not terminate
    and must perform SOME more comparisons if it reached $v$ (i.e.,
    $v$ is not a leaf of the decision tree).

    Thus, the algorithm can terminate only when $|\LSet|=1$. Let
    $\beta = n/ k -1$, and observe that in the beginning of \MedianAlg
    execution, it has
    \[
    M =\frac{n!}{k! \pth{\beta!}^k}
    \]
    possible labellings for the output.  Indeed, a consistent
    labeling, is made out of $k$ unlabeled elements, and then $\beta$
    elements are labeled by $i$, for $i=1, \ldots, k$. Now, by
    Stirling's approximation, we have
    \[
    M \geq \frac{(\beta k)!}{\pth{\beta!}^k} \approx \frac{\sqrt{2 \pi
          \beta k} \frac{(\beta k)^{\beta k}}{{e^{\beta k}}}} {\pth{
          \sqrt{2 \pi \beta} \frac{ \beta^\beta}{e^\beta}}^k }
    \frac{(\beta k)!}{\pth{\beta!}^k} %
    =%
    \frac{\sqrt{2 \pi \beta k} {(\beta k)^{\beta k}}{{}}} {\pth{
          \sqrt{2 \pi \beta}{ \beta^\beta}}^k } =
    k^{\beta k } 
    \frac{\sqrt{2 \pi \beta k} } {\pth{
          \sqrt{2 \pi \beta}}^k}.
    \]
    Each comparison performed can only half this set of possible
    labellings, in the worst case. It follows, that in the worst case,
    the algorithms needs 
    \[
    \Omega\pth{ \log M } = \Omega\pth{ \beta k \log k - \frac{k}{2}
       \log \pth{2 \pi \beta}} = \Omega\pth{ \beta k \log k} =
    \Omega\pth{ n \log k }
    \]
    comparisons, as claimed.
\end{proof}

\begin{lemma}
    Solving the \KMed problem requires $\Omega(n \log k )$ comparisons.
\end{lemma}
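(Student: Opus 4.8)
The plan is to reduce the problem of \clmref{lower:first} --- computing the set of elements of $\Arr$ whose ranks form $\Psi(n,k)$ --- to the \KMed problem. The idea is to embed the (unsorted) quantile instance, padded with ``dummy'' elements on both sides, into a larger array, and to choose $k$ query intervals whose medians are exactly the $k$ desired order statistics. Since every comparison that touches a dummy element has a forced outcome, a \KMed algorithm on the padded array yields a comparison algorithm for the quantile problem using no more ``real'' comparisons, and \clmref{lower:first} then supplies the bound.

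Concretely, let $\Arr$ be a quantile instance of size $n$ with $k \mid n$ and $n>k$, and build an array $\ArrB$ of length $3n$: in positions $[1,n]$ place a block $L$ of $n$ distinct values, all smaller than every entry of $\Arr$; in positions $[n+1,2n]$ place $\Arr$ itself; in positions $[2n+1,3n]$ place a block $R$ of $n$ distinct values, all larger than every entry of $\Arr$. For $i=1,\dots,k$ let $r_i=in/k$. If $r_i\le n/2$, put $t_i=n-2r_i$ and set $Q_i=[\,n-t_i+1,\;2n\,]$: this interval consists of the $t_i$ smallest dummies of $L$ together with all of $\Arr$, it has even size $2(n-r_i)$, and one checks that its median is the rank-$r_i$ element of $\Arr$. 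If $r_i\ge n/2$, put $s_i=2r_i-n\le n$ and set $Q_i=[\,n+1,\;2n+s_i\,]$: this interval consists of all of $\Arr$ together with $s_i$ dummies of $R$, has even size $2r_i$, and its median is again the rank-$r_i$ element of $\Arr$. Hence the answers to $Q_1,\dots,Q_k$ are exactly the $k$ elements of $\Arr$ whose ranks lie in $\Psi(n,k)$.

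Now, given a \KMed algorithm, simulate it on $\ArrB$ with queries $Q_1,\dots,Q_k$, hardwiring the outcome of every comparison involving an entry of $L$ or of $R$; these outcomes are determined by the construction and by the fixed internal orders of $L$ and $R$, and are consistent with the actual values of $\Arr$, so the simulation is a legitimate comparison decision tree on size-$n$ inputs, and its number of non-hardwired comparisons is at most the number of comparisons the \KMed algorithm performs. It remains to verify that the leaf reached pins down \emph{which} elements of $\Arr$ are the $\Psi(n,k)$-quantiles, i.e.\ that this simulation really solves the problem of \clmref{lower:first}. This is where the interpolation-plus-perturbation technique from the proof of \clmref{lower:first} is reused: if two inputs consistent with all comparisons seen at a leaf disagreed about which position of $\Arr$ achieves the median of some $Q_i$, then linearly interpolating between them (after the usual infinitesimal perturbation, ensuring no three entries ever coincide) would produce a consistent input on which the median of $Q_i$ takes a different value, contradicting the fixed, correct output at that leaf. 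Thus the simulation is an algorithm of the kind considered in \clmref{lower:first} for the array $\Arr$, so it performs $\Omega(n\log k)$ comparisons, and hence so does the \KMed algorithm on the $3n$-element array $\ArrB$; rescaling the array length, the \KMed problem requires $\Omega(n\log k)$ comparisons.

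The routine parts are the arithmetic identifying the medians of the $Q_i$ --- a case check on the parity of $n+t_i$ and of $n+s_i$ against the ceiling in the definition of ``median'' --- and the remark that the dummy comparisons are forced (this is why $L$ and $R$ are taken uniformly below and above $\Arr$). The step I expect to be the main obstacle is the last one: certifying that a correct \KMed algorithm, which only ever reports median \emph{values}, has necessarily made enough comparisons to identify the quantile \emph{elements} that \clmref{lower:first} lower-bounds; this forces a second appeal to the interpolation argument of \clmref{lower:first}, now applied to the median-achieving position rather than to a labelling. (The intervals $Q_i$ built above are not laminar, so this reduction as stated does not by itself yield the stronger hierarchical lower bound mentioned in the introduction; that would need a more careful placement of the dummies, or a direct decision-tree argument for hierarchical queries.)
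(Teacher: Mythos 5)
Your proof is correct and is essentially the paper's own reduction: pad $\Arr$ with blocks of dummy elements lying below and above all of $\Arr$, choose $k$ query intervals whose medians are exactly the elements of rank in $\Psi(n,k)$, and invoke \clmref{lower:first} (your extra care about hardwiring the forced dummy comparisons and about the decision-tree leaves determining the quantile elements makes explicit what the paper leaves implicit). The only substantive difference is that the paper pads to size $4n$ and uses the nested intervals $[1,\,2n+2\ell-1]$, which keeps the query set hierarchical and so also yields the paper's follow-up remark, whereas your two-sided intervals are not laminar --- as you yourself note --- but this does not affect the lemma as stated.
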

\begin{proof}
    We will show that given an algorithm for the \KMed problem, one
    can reduce it, in linear time, to the problem of
    \clmref{lower:first}. That would immediately imply the lower
    bound.

    Given an input array $\Arr$ of size $n$, construct a new
    array $\ArrB$ of size $4n$ where the first $n$ elements of
    $\ArrB$ are $- \infty$, $\ArrB[n+1, \ldots, 2n ] =\Arr$, and
    $\ArrB[j] = +\infty$, for $j =2n+1, \ldots, 4n$. Clearly, the
    $\ell$\th element of $\Arr$ is the median of the range $[1, 2n
    +2\ell-1]$ in $\ArrB$. Thus, we can solve the problem of
    \clmref{lower:first} using $k$ \KMedQ, implying the lower bound.
\end{proof}

\medskip
Observe that the lower bound holds even for the case when the intervals
are hierarchical. 

\section{Our Algorithm}

We first consider the case when all the query intervals are provided
ahead of time.  We will present a slow algorithm first, and later show
how to make it faster to get our bounds.  Our algorithm uses
the following folklore result.

\begin{theorem}
    Given $\ell$ sorted arrays with total size $n$, there is a
    deterministic algorithm to determine median of the set formed by
    the union of these arrays using $O( \ell \log (n/\ell) )$
    comparisons.

    \thmlab{sorted:median:2}
\end{theorem}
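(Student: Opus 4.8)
The plan is to solve, more generally, the selection problem: given $\ell$ sorted arrays $A_1, \ldots, A_\ell$ with $|A_i| = n_i$ and $\sum_i n_i = n$, return the element $x$ of a prescribed rank $r$ in their union (the median being $r = \ceil{n/2}$). I would keep, for each array $A_i$, a \emph{window} consisting of the entries at positions $a_i+1$ through $b_i$ (using sentinels $A_i[0] = -\infty$ and $A_i[n_i+1] = +\infty$), maintaining the invariant that $a_i \le c_i \le b_i$, where $c_i$ is the number of entries of $A_i$ that are $\le x$ --- equivalently $A_i[1], \ldots, A_i[a_i]$ are all $\le x$ and $A_i[b_i+1], \ldots, A_i[n_i]$ are all $> x$. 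Because $\sum_i c_i = r$, this forces $\sum_i a_i \le r \le \sum_i b_i$; thus, writing $w_i = b_i - a_i$ and $W = \sum_i w_i$, the element $x$ has rank $r - \sum_i a_i$ among the $W$ surviving window entries. Initially $a_i = 0$, $b_i = n_i$, $W = n$.

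Each round (roughly) halves the windows. For every $i$ with $w_i \ge 2$ let $m_i$ be the median of window $i$, and let $\mu$ be a weighted median of the $m_i$'s where $m_i$ has weight $w_i$; this takes $O(\ell)$ comparisons, and one more $O(\ell)$-comparison pass splits the arrays according to $m_i < \mu$, $m_i = \mu$, $m_i > \mu$. By the weighted-median property, the arrays with $m_i \le \mu$ have total weight at least $W/2$, and so do the arrays with $m_i \ge \mu$. Hence, the moment we know on which side of $\mu$ the element $x$ lies, we can safely prune: if $x > \mu$, discard the bottom $\ceil{w_i/2}$ entries of every window with $m_i \le \mu$ (they are $< x$), removing at least $W/4$ entries and decreasing $r$ accordingly; symmetrically if $x < \mu$; and if $x = \mu$ we are done. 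Thus $W$ shrinks by a constant factor each round, so after $O(\log(n/\ell))$ rounds only $O(\ell)$ entries remain, and a linear-time selection among them finishes. (If $n < 2\ell$ one such selection does the whole job, in $O(n) = O(\ell)$ time.) The base cases $\ell = 1$ (output $A_1[r]$) and $\ell = 2$ (classical two-array selection, $O(\log \min(n_1, n_2))$ comparisons) are immediate.

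The subtle step --- and the part I expect to be the main obstacle --- is deciding which side of $\mu$ contains $x$ cheaply enough to reach the bound $O(\ell \log(n/\ell))$ rather than a logarithmic factor more. Running, each round, a fresh binary search for $\mu$ in every window costs $O(\sum_i \log w_i)$ per round, which over the $\Theta(\log(n/\ell))$ rounds sums only to $O(\ell \log^2(n/\ell))$. Instead I would coordinate these searches across rounds: the comparisons used to narrow the window of $A_i$ are carried over and charged against the geometric shrinking of that window, so that the total number of comparisons ever charged to $A_i$ is $O(\log n_i)$; the global invariant $\sum_i a_i \le r \le \sum_i b_i$ is the tool for deciding, with only $O(\ell)$ further comparisons per round, which side of $\mu$ to prune. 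Summing over the arrays and using concavity of the logarithm (equivalently $\prod_i n_i \le (n/\ell)^\ell$ by AM--GM), $\sum_i \log n_i = O(\ell \log(n/\ell))$, which is the claimed bound; carrying out this amortization rigorously is the crux of the proof.
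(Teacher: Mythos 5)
You have correctly isolated the crux --- deciding on which side of the pivot $\mu$ the target element lies, using only $O(\ell)$ comparisons per round --- but the mechanism you propose for it does not hold up, and this is exactly the step your proof leaves unproved. With one representative $m_i$ per window, the rank of $\mu$ in the union is pinned down only to within roughly $W/2$ (for each window with $m_i \le \mu$ you learn nothing about where its upper half sits relative to $\mu$, and symmetrically), so the invariant $\sum_i a_i \le r \le \sum_i b_i$ genuinely can fail to determine the side. Your fallback is to locate $\mu$ in each window by (amortized) binary search, claiming a total charge of $O(\log n_i)$ to array $A_i$. But repeated searches in a geometrically shrinking window do not telescope that way: in round $t$ the window has size about $n_i/2^t$ and the new pivot can sit anywhere in it (finger searches from the previous round's position do not help, since each round uses a fresh $\mu$), so the cost per array is on the order of $\sum_t \log(n_i/2^t) = \Theta(\log^2 n_i)$, and the total is $\Theta(\ell \log^2 (n/\ell))$ when all $n_i = n/\ell$ --- no better than the naive bound you already rejected. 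No amortization scheme is actually exhibited, and I do not see one that rescues this particular pivot rule. (The bound itself is attainable along weighted-median lines, as in Frederickson--Johnson, but the bookkeeping needed there is substantially more delicate than what you sketch.)

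The paper's proof avoids the problem rather than solving it: instead of one median per window, each round it takes $O(1)$ \emph{equally spaced} representatives per array (about $(r_i - l_i + 1)/\Delta$ of them with $\Delta = \floor{\ncurr/(32\ell)}$, so $O(\ell)$ in total), whose ranks \emph{within their own arrays} are known from their indices with zero comparisons. Selecting among these $O(\ell)$ representatives then determines the union-rank of each one up to an additive error of at most $\ncurr/16$, which is small enough that all blocks whose rank ranges exclude the target --- a constant fraction of $\ncurr$ --- can be discarded, at a cost of $O(\ell)$ per round (or $O(\ell\log\ell)$ in the simpler variant that merges the representative lists). No binary search for a pivot inside the arrays ever occurs, which is precisely what makes the per-round cost $O(\ell)$ and the total $O(\ell\log(n/\ell))$. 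To repair your argument you would need either to adopt this equally-spaced-representative device or to supply a genuinely worked-out amortization, which is currently missing.
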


Since we were unable to find a reference to precisely this result 
beyond~\cite{kms-rmrmq-05} where a slightly weaker result is stated as a folklore
claim, we describe this algorithm in \apndref{deterministic:algorithm}.

\subsection{A Slow Algorithm}
\seclab{slow}

Here we show how to solve the \KMed problem.

Let $I_1, \ldots, I_k$ be the given (not necessarily disjoint) $k$
intervals in the array $\Arr[1..n]$. We break $\Arr$ into (at most)
$2k-1$ atomic disjoint intervals labeled in the sorted order $B_1,
\ldots, B_m$, such that an atomic interval does not have an endpoint
of any $I_i$ inside it. Next, we sort each one of the $B_i$'s, and
build a balanced binary tree having $B_1, \ldots, B_m$ as the leaves
in this order. In a bottom-up fashion we merge the sorted arrays
sorted in the leaves, so that each node $v$ stores a sorted array
$S_v$ of all the elements stored in its subtree.  Let $T$ denote this
tree that has height $O( \log k)$.

Now, computing the median of an interval $I_j$, is done by extracting
the $O( \log k )$ suitable nodes in $T$ that cover $I_j$. Next, we
apply \thmref{sorted:median:2}, and using $O( \log n \log k )$
comparisons, we get the desired median. We now apply this to the $k$
given intervals.  Observe that sorting the atomic intervals takes $O(
n \log n)$ comparisons and merging them in $O( \log k )$ levels takes
$O( n \log k)$ comparisons in all. This gives:

\begin{lemma}
    The algorithm above uses $O( n \log n + k \log n \log k )$
    comparisons.
\end{lemma}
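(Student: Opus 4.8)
The plan is to bound separately the costs of the three phases of the algorithm: (i) sorting the atomic intervals, (ii) building the tree $T$ by bottom-up merging, and (iii) answering the $k$ queries, and then add them up. First I would record the structural facts I need: the $k$ intervals $I_1,\ldots,I_k$ have at most $2k$ endpoints, so $\Arr$ is split into $m = O(k)$ atomic intervals $B_1,\ldots,B_m$; since no atomic interval contains an endpoint of any $I_j$ in its interior, every query interval $I_j$ is \emph{exactly} the union of a contiguous block of consecutive atomic intervals, i.e.\ a contiguous block of leaves of $T$; and $T$, being a balanced binary tree on $m = O(k)$ leaves, has height $O(\log k)$.

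For phase (i), sorting $B_i$ costs $O(|B_i|\log|B_i|) = O(|B_i|\log n)$ comparisons, and summing over $i$ with $\sum_i |B_i| = n$ gives $O(n\log n)$. For phase (ii), I would argue level by level: the nodes at any fixed level of $T$ store sorted arrays over pairwise disjoint subsets of $\Arr$ whose sizes sum to $n$, and producing the array $S_v$ at an internal node by merging its two children's sorted arrays (of total size $|S_v|$) costs $O(|S_v|)$ comparisons; hence the total cost at one level is $O(n)$, and over the $O(\log k)$ levels phase (ii) costs $O(n\log k)$, which is absorbed into $O(n\log n)$.

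For phase (iii), the key point is that, because $I_j$ is a contiguous block of leaves of $T$, it is covered by the standard canonical decomposition into $\ell = O(\log m) = O(\log k)$ nodes of $T$ whose stored sorted arrays $S_v$ are pairwise disjoint and together contain exactly the entries of $\Arr$ indexed by $I_j$. Feeding these $\ell$ sorted arrays, of total size at most $n$, into \thmref{sorted:median:2} produces the median of $I_j$ using $O(\ell\log(n/\ell)) = O(\log k\log n)$ comparisons, so over all $k$ queries this phase costs $O(k\log k\log n)$.

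Adding the three bounds yields $O(n\log n + n\log k + k\log k\log n) = O(n\log n + k\log k\log n)$, as claimed. The only mildly delicate step is phase (iii): one must verify that the atomic-interval construction together with the canonical decomposition of a leaf range of $T$ really expresses each $I_j$ as a disjoint union of $O(\log k)$ of the \emph{precomputed} sorted arrays $S_v$, so that \thmref{sorted:median:2} applies verbatim with $\ell = O(\log k)$; once that is in place, everything else is a routine summation, so I do not expect a serious obstacle.
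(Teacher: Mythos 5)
Your proposal is correct and follows essentially the same route as the paper: sort the $O(k)$ atomic intervals for $O(n\log n)$, merge level by level up the balanced tree for $O(n\log k)$, and answer each query by canonically decomposing it into $O(\log k)$ nodes and invoking \thmref{sorted:median:2}, for $O(\log k\log n)$ per query. The paper's own justification is terser (it states these three costs without spelling out the level-by-level accounting or the canonical-decomposition argument), but the content is the same.
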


Note, that this algorithm is still mildly interesting. Indeed, if the
intervals $I_1, \ldots, I_k$ are all ``large'', then the running time
of the naive algorithm is $O(n k)$, and the above algorithm is faster
for $k > \log n$.

\subsection{Our Main Algorithm}
\seclab{faster:algorithm}

The main bottleneck in the above solution was the presorting of the
pieces of the array corresponding to atomic intervals.  In the optimal
algorithm below, we do not fully sort them.

\begin{defn}
    A subarray $X$ is \emph{$u$-sorted} if there is a sorted list
    $\List{X}$ of at most (say) $20 u$ elements of $X$ such that these
    elements appear in this sorted order in $X$ (not necessarily as
    consecutive elements). Furthermore, for an element $\alpha$ of
    $\List{X}$, all the elements of $X$ smaller than it appear before
    it in $X$ and all the elements larger than $\alpha$ appear after
    $\alpha$ in $X$.  Finally, we require that the distance between
    two consecutive elements of $\List{X}$ in $X$ is at most $|X|/u$,
    where $|X|$ denotes the size of $X$.  We will refer to the
    elements of $X$ between two consecutive elements of $\List{X}$ as
    a \emph{segment}.
\end{defn}

An array $X$ of $n$ elements that is $n$-sorted is just
sorted, and a $0$-sorted array is unsorted. Another way to look at it,
is that the elements of $\List{X}$ are in their final position in the
sorted order, and the elements of the intervals are in an arbitrary
ordering.

\begin{lemma}
    Given an unsorted array $X$, it can be $u$-sorted using $O(|X| \log u
    )$ comparisons, where $|X|$ denotes the number of elements of $X$.
\end{lemma}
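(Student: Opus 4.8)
The plan is to build $\List{X}$ by repeatedly selecting medians and partitioning, using the classical deterministic linear-time selection algorithm~\cite{bfprt-tbs-73} as the basic primitive. Set $t = \ceil{\log_2 u}$. At the top level, find the median $p$ of $X$ using $O(|X|)$ comparisons; the selection algorithm also rearranges $X$ (or one extra partition pass does) so that every element smaller than $p$ precedes it in $X$ and every larger element follows it. Recurse on the two contiguous halves of $X$ lying to the left and to the right of $p$, and stop after $t$ levels of recursion. Let $\List{X}$ be the set of pivots collected over all levels, listed in the order in which they occur in the (now partially rearranged) array $X$. There are at most $2^t - 1 < 2u \le 20u$ of them.

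Next I would establish correctness by induction on the recursion, maintaining the invariant that each recursive call operates on a contiguous subarray $X[a..b]$ for which every element of $X[1..a-1]$ is smaller than every element of $X[a..b]$, which in turn is smaller than every element of $X[b+1..n]$, and $X[a-1]$ and $X[b+1]$ (when they exist) are pivots already placed at their true ranks. Partitioning $X[a..b]$ around its median $p$ preserves this invariant for the two child calls and simultaneously shows that $p$ lands at its correct global rank, with all smaller elements of $X$ before it and all larger elements after it. Hence every element of $\List{X}$ has the property required by the definition, and since the pivots are pairwise comparable and each sits at its true rank, they occur in sorted order in $X$.

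Then I would bound the comparison count and the segment lengths. At recursion depth $i$ there are at most $2^i$ subproblems whose sizes sum to at most $|X|$, so by linearity of the cost of selection and partitioning, depth $i$ costs $O(|X|)$ comparisons; summing over the $t = O(\log u)$ depths gives $O(|X|\log u)$ comparisons overall (the remaining work — moving elements and collecting pivots — is also linear per level). For the segments, each median split divides a block of size $s$ into two blocks of size at most $\ceil{s/2}$, and iterating the identity $\ceil{\ceil{s/2}/2} = \ceil{s/4}$ shows that after $t$ splits every segment of $X$ between two consecutive elements of $\List{X}$ (or between an end of $X$ and the nearest pivot) has size at most $\ceil{|X|/2^t} \le \ceil{|X|/u}$, which is within the slack of the definition (one extra level of recursion absorbs the rounding if an exact $|X|/u$ bound is insisted on). Also $2^t \ge u$, so there are at least $u$ segments as needed.

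I do not expect a real obstacle here: the only mild subtlety is the bookkeeping observation that the standard selection routine, or a single additional pass, yields the three-way arrangement used in the invariant, and that the ceilings in the segment-length bound compose cleanly; beyond that the argument is a straightforward induction plus a geometric-sum comparison count.
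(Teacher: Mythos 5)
Your proposal is correct and is essentially the paper's own argument: the paper likewise $u$-sorts by recursively finding the median, partitioning into two halves, and stopping after $O(\log u)$ levels of recursion, with linear work per level. You have simply filled in the bookkeeping (the invariant placing each pivot at its true rank, the bound on $|\List{X}|$, and the segment-length bound) that the paper leaves implicit.
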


\begin{proof}
    We just find the median of $X$, partition $X$ into two equal size
    subarrays, and continue recursively on the two subarrays. The
    depth the recursion is $O( \log u)$, and the work at each level of
    the recursion is linear, which implies the claim.
\end{proof}

\begin{lemma}
    Given a two $u$-sorted arrays $X$ and $Y$, they can be merged into
    an $u$-sorted array using $O( |X| + |Y|)$ comparisons.
\end{lemma}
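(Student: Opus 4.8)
The plan is to work directly with the two ``skeleton'' lists $\List{X}$ and $\List{Y}$. First I would merge $\List{X}$ and $\List{Y}$ into a single sorted list $M$; since each has $O(u)$ elements this costs $O(u)$ comparisons, and $M$ records, for every pivot of $\List{X}$, the two consecutive pivots of $\List{Y}$ that bracket it (and symmetrically). The key observation is that a pivot $\alpha\in\List{X}$ is already in its correct position inside $X$, so to place it correctly inside the merged array $Z = X\cup Y$ it suffices to count how many elements of $Y$ are smaller than $\alpha$; all of $Y$ lying below the lower bracketing $\List{Y}$-pivot is already accounted for, so the only elements of $Y$ whose order relative to $\alpha$ is unknown are those in the single $Y$-segment whose range contains $\alpha$ — a segment of at most $|Y|/u$ elements, which we simply scan element by element.

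Summing this over the $O(u)$ pivots of $\List{X}$ costs $O(u)\cdot(|Y|/u)=O(|Y|)$ comparisons, and symmetrically $O(|X|)$ for the pivots of $\List{Y}$ scanned against the $X$-segments, for $O(|X|+|Y|)$ in total. After these scans we know the exact rank in $Z$ of every pivot in $M$, and we also know, for each pair of consecutive pivots of $M$, exactly which elements of $X$ and which of $Y$ lie strictly between them (each such gap is a sub-block of one $X$-segment together with a sub-block of one $Y$-segment, of combined size at most $|X|/u+|Y|/u=|Z|/u$). We can then assemble $Z$ with no further comparisons: place each pivot of $M$ at its computed rank and fill each gap with the concatenation, in arbitrary internal order, of the two corresponding pieces. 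This $Z$ is consistent with $M$ as a skeleton and has every inter-pivot gap of size at most $|Z|/u$.

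The remaining issue — and the step I expect to be the real obstacle — is that $M$ may contain up to $40u$ pivots, violating the ``at most $20u$'' requirement in the definition of $u$-sortedness. So the last step is to thin $M$ down to a sublist $\List{Z}\subseteq M$ by greedily deleting a pivot whenever doing so leaves every gap of size at most $|Z|/u$ (deleting a pivot merges its two flanking gaps and reclassifies the pivot itself as an ordinary element of the merged gap). Since the gaps of $M$ already have size at most $|Z|/u$, this process is well defined, costs only $O(u)$ time, and preserves the $\le |Z|/u$ spacing. To bound $|\List{Z}|$, note that in a maximally thinned list every interior surviving pivot is irremovable, i.e.\ the two gaps flanking it together with the pivot exceed $|Z|/u$ in size; summing this inequality over every other surviving pivot (so the flanking gaps are pairwise disjoint) and comparing against the total element count $|Z|$ forces $|\List{Z}| = O(u)$, comfortably within $20u$ once $|Z|$ is not tiny (and trivially so when $|Z| < 2u$) — which is precisely why the definition is stated with slack in the constant. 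Combining the steps gives the claimed $O(|X|+|Y|)$ comparison bound.
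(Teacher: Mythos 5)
Your proof is correct and follows essentially the same route as the paper's: merge the two skeletons, locate each pivot within the single opposing segment that can contain it at cost $O(|X|/u + |Y|/u)$ per pivot (so $O(|X|+|Y|)$ overall since there are $O(u)$ pivots), and then re-coarsen the merged skeleton so that $|\List{Z}| = O(u)$ while keeping every gap of size at most $|Z|/u$. The only differences are organizational — you compute exact ranks symmetrically and assemble at the end, where the paper inserts $\List{X}$ into $Y$ and then binary-searches the remaining elements of $X$ among the pivots — and your explicit counting argument for why the thinned skeleton has $O(u)$ pivots is a welcome elaboration of the paper's terser "merge consecutive intervals that are too small" step.
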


\begin{proof}
    Convert $Y$ into a linked list. Insert the elements of $\List{X}$
    into $Y$. This can be done by scanning the list of $Y$ until we
    arrive at the segment $Y_i$ of $Y$ that should contain an
    element $b$ of $\List{X}$ that we need to insert. We partition
    this segment using $b$ into two intervals, add $b$ to
    $\List{Y}$, and continue in this fashion with each such $b$.  
    This takes $O(|Y_i|) =
    O(|Y|/u)$ comparisons per $b$ (ignoring the scanning cost which is $O(|Y|)$
    overall).  Let $Z$ be the resulting $u$-sorted array, which
    contains all the elements of $Y$ and all the elements of
    $\List{X}$, and $\List{Z} = \List{X} \cup \List{Y}$.  Computing
    $Z$ takes
    \[
    O\pth{ |Y| + |\List{X}| \frac{|Y|}{u} } = O( |Y| )
    \]
    comparisons. 

    We now need to insert the elements of $X \setminus \List{X}$ into
    $Z$. Clearly, if a segment $X_i$ of $X$ has $\alpha_i$ elements of
    $\List{Z}$ in its range, then inserting the elements of $X_i$
    would take $O(|X_i| \log \alpha_i)$ comparisons. Thus, the total
    number of comparisons is
    \[
    O \pth { \sum_{i} |X_i| \log \alpha_i } = O\pth{ \sum_i
       \frac{\cardin{X}}{u} \log \alpha_i } = O\pth{
       \frac{\cardin{X}}{u} \sum_i \alpha_i } = O\pth{ X },
    \]
    since $|X_i| \leq |X|/u$, $\log \alpha_i \leq \alpha_i$ and
    $\sum_{i} \alpha_i = O(u)$.

    The final step is to scan over $Z$, and merge consecutive
    intervals that are too small (removing the corresponding elements
    from $\List{Z}$), such that each interval is of length at most
    $|Z|/u$. Clearly, this can be done in linear time. The resulting
    $Z$ is $u$-sorted since its sorted list contains at most $2u +1$
    elements, and every interval is of length at most $|Z|/u$. 
\end{proof}

\medskip

Note, that the final filtering stage in the above algorithm is need to
guarantee that the resulting list $\List{Z}$ size is not too large, if
we were to use this merging step several times.

\medskip

In the following, we need a modified version of
\thmref{sorted:median:2} that works for $u$-sorted arrays.
\begin{theorem}
    Given $\ell$ $u$-sorted arrays $A_1, \ldots, A_\ell$ with total
    size $n$ and a rank $k$, there is a deterministic algorithm that
    returns $\ell$ subintervals $B_1, \ldots, B_\ell$ of
    these arrays and a number $k'$, such that the following properties
    hold.
    \begin{enumerate}[(i)]
        \item The $k'$\th ranked element of $B_1 \cup \cdots \cup
        B_\ell$ is the $k$\th ranked element of $A_1 \cup \cdots \cup
        A_\ell$.

        \item The running time is  $O( \ell \log (n/\ell) )$ time.

        \item $\sum_{i=1}^\ell \cardin{B_i} = O\pth{ \ell \cdot (n/u) }$.
    \end{enumerate}

    \thmlab{sorted:median:3}
\end{theorem}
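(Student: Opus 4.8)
The plan is to reduce the problem, in $O(n)$ additional time, to a single ordinary selection query --- ``return the value of rank $k$'' --- among $\ell$ \emph{genuinely} sorted arrays of total size $n$, solve that using the (obvious arbitrary-rank strengthening of) Theorem~\thmref{sorted:median:2}, and then read the subintervals $B_i$ off from the value returned.

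For the reduction, fix $i$ and write $\List{A_i}=(\beta_{i,1}<\cdots<\beta_{i,m_i})$, $m_i\le 20u$, for the certifying sorted list, with the convention $\beta_{i,0}=-\infty$, $\beta_{i,m_i+1}=+\infty$. Recall that each $\beta_{i,t}$ occupies its true sorted position in $A_i$, and that the $\beta_{i,t}$'s cut $A_i$ into \emph{segments} of length at most $\cardin{A_i}/u$, the segment between $\beta_{i,t}$ and $\beta_{i,t+1}$ consisting exactly of the entries of $A_i$ whose value lies in $(\beta_{i,t},\beta_{i,t+1})$. Let $\widehat A_i$ be the weakly sorted array of length $\cardin{A_i}$ obtained from $A_i$ by \emph{replacing every entry not in $\List{A_i}$ with the element of $\List{A_i}$ immediately below it} (entries below $\beta_{i,1}$ becoming $-\infty$). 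Then $\widehat A_i$ is sorted, any $\widehat A_i[p]$ is retrievable in $O(1)$ time after $O(\cardin{A_i})$ preprocessing, and for every value $v$ the rank of $v$ in $\widehat A_i$ is at least, and exceeds by at most $\cardin{A_i}/u$, the rank of $v$ in $A_i$ (the excess being the number of segment entries pushed down past $v$). Summing over $i$, for every value $v$,
\[
0\;\le\;\mathrm{rank}_{\widehat A_1\cup\cdots\cup\widehat A_\ell}(v)\;-\;\mathrm{rank}_{A_1\cup\cdots\cup A_\ell}(v)\;\le\;n/u .
\]

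Run the arbitrary-rank version of Theorem~\thmref{sorted:median:2} on the sorted arrays $\widehat A_1,\ldots,\widehat A_\ell$, of total size $n$, to get the value $\beta^\ast$ of rank $k$ in their union, using $O(\ell\log(n/\ell))$ comparisons; this is (ii). Let $x$ denote the true rank-$k$ element of $A_1\cup\cdots\cup A_\ell$. Since $\beta^\ast$ has rank exactly $k$ in $\bigcup_i\widehat A_i$ while its multiplicity there is at most $1+n/u$, the displayed bound forces $\bigl|\mathrm{rank}_{A_1\cup\cdots\cup A_\ell}(\beta^\ast)-k\bigr|\le n/u+1$; hence $x$ and $\beta^\ast$ differ by at most $n/u+1$ in rank within each individual $A_i$ as well, because the number of entries of $A_i$ that lie between them is bounded by the number of entries of the whole union that lie between them. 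Put $\Delta=\lceil n/u\rceil+2$. For each $i$ let $\pi_i^-$ and $\pi_i^+$ be the positions in $A_i$ of the two elements of $\List{A_i}$ that bound the segment whose value range contains $\beta^\ast$, and set
\[
B_i \;=\; A_i\bigl[\,\max(1,\pi_i^- - \Delta)\ ..\ \min(\cardin{A_i},\pi_i^+ + \Delta)\,\bigr].
\]
By the previous sentence, if $x\in A_i$ then the position of $x$ in $A_i$ lies in this range, so $x\in B_1\cup\cdots\cup B_\ell$; moreover every entry of $A_i$ before $B_i$ is smaller than $x$ and every entry after $B_i$ is larger than $x$, so setting $k'=k-\sum_i(\text{first position of }B_i-1)$ makes $x$ the rank-$k'$ element of $\bigcup_i B_i$, which is (i). Finally $\cardin{B_i}\le\cardin{A_i}/u+2\Delta+1=O(n/u)$ since $\cardin{A_i}\le n$, so $\sum_i\cardin{B_i}=O(\ell\cdot n/u)$, which is (iii).

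The one part needing genuine care is making the two ``within $n/u+1$'' rank estimates rigorous and then disposing of the boundary cases they create: $\beta^\ast$ could itself be an element of some $\List{A_i}$, or lie below $\beta_{i,1}$ or above $\beta_{i,m_i}$, and the window defining $B_i$ may be clipped by the ends of $A_i$ (hence the $\max$ and $\min$). Each of these is routine, but it must be checked that every $B_i$ is a bona fide (possibly empty) subinterval, that $x$ is still certain to lie in $\bigcup_i B_i$, and that the formula for $k'$ stays exact; the cleanest order is to first establish $\bigl|\mathrm{rank}_{A_1\cup\cdots\cup A_\ell}(\beta^\ast)-k\bigr|\le n/u+1$ and derive everything else from it. One should also record that Theorem~\thmref{sorted:median:2} carries over, with the same comparison count, both to an arbitrary target rank and to weakly sorted inputs accessed through the $O(1)$-time indexing of $\widehat A_i$ --- both extensions being immediate.
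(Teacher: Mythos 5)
Your overall strategy is sound and genuinely different from the paper's. The paper runs the multi-array selection procedure of \thmref{sorted:median:2} directly on the short representative lists $\List{A_i}$ (each of length $O(u)$, with each representative's exact rank inside its own $A_i$ precomputed), and simply terminates the pruning early: since the rank of a non-representative is known only to within an additive $n/u$, the active ranges cannot be narrowed below total length $O(\ell\cdot n/u)$, and those residual active ranges are the $B_i$'s. Your route --- replace every non-representative entry by the representative just below it to get genuinely sorted proxy arrays $\widehat A_i$ of total length $n$, select exactly there, and convert back --- makes the error analysis more explicit than the paper's one-sentence argument, and your rank bookkeeping ($0\le\mathrm{rank}_{\widehat A}(v)-\mathrm{rank}_A(v)\le n/u$, multiplicity of $\beta^*$ at most $1+n/u$, hence $\cardin{\mathrm{rank}_{A_1\cup\cdots\cup A_\ell}(\beta^*)-k}\le n/u$) is correct. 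One caveat you should make explicit: the $O(\cardin{A_i})$ preprocessing that gives $O(1)$ access to $\widehat A_i$ must be charged to the one-time construction of the $u$-sorted arrays, not to the query, since the theorem (and its use inside Lemma~3.8, where it is invoked $k$ times) allows no additive $O(n)$ per invocation; the paper makes the analogous assumption that representative ranks are precomputed.

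There is, however, a concrete quantitative error: $\Delta=\lceil n/u\rceil+2$ is too small, and with it both ``the position of $x$ in $A_i$ lies in this range'' and ``every entry of $A_i$ before $B_i$ is smaller than $x$'' can fail. The bound $\cardin{\mathrm{rank}(\beta^*)-k}\le n/u$ controls the number of entries of $A_i$ whose \emph{values} lie strictly between $\beta^*$ and $x$, but in a $u$-sorted array the \emph{positional} distance can exceed this count by up to $\cardin{A_i}/u$: the segment containing $x$ is internally unordered, so up to $\cardin{A_i}/u-1$ entries of that segment may be larger than $x$ in value yet placed before $x$ in the array (and symmetrically at the other end of the window). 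Concretely, if all $\approx n/u$ elements of the union lying in $(\beta^*,x)$ sit in $A_i$ in the segments between $\pi_i^+$ and $x$'s segment, and $x$ is preceded inside its own segment by $\cardin{A_i}/u-1$ elements larger than it, then $\mathrm{pos}(x)-\pi_i^+\approx n/u+\cardin{A_i}/u$, which exceeds your $\Delta$ once $n/u>4$. The fix is cheap --- take $\Delta\ge\lceil n/u\rceil+\lceil\max_i\cardin{A_i}/u\rceil+O(1)$, e.g.\ $\Delta=2\lceil n/u\rceil+2$ --- and it changes none of (i)--(iii) asymptotically, but as written the containment claim and hence the formula for $k'$ are not established. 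This is precisely the value-rank versus array-position distinction that your final paragraph waves off as routine; it is routine to repair, but the constant you chose does not survive the check.
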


\begin{proof}
    For every element of $\List{A_i}$ realizing the $u$-sorting of the
    array $A_i$, we assume we have its rank in $A_i$ precomputed.
    Now, we execute the algorithm of \thmref{sorted:median:2} on these
    (representative) sorted arrays (taking into account their
    associated rank). (Note that the required modifications of the algorithm of
    \thmref{sorted:median:2} are tedious but straightforward, and we
    omit the details.) The main problem is that now the rank of an
    element is only estimated approximately up to an (additive) error
    of $n/u$. In the end of process of trimming down the
    representative arrays, we might still have active intervals of
    total length $2n/u$ in each one of these arrays, resulting in the
    bound on the size of the computed intervals.
\end{proof}

Using the theorem above as well as two lemmas above, we get the
following result, which is building up to the algorithmic part of
\thmref{main}.

\begin{lemma}
    There is a deterministic algorithm to solve the \KMed problem in
    $O( n \log k + k \log k \log n )$ time, when the $k$ query
    intervals are provided in advance.

    \lemlab{main}
\end{lemma}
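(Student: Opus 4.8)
The plan is to reuse the structure of the slow algorithm from \secref{slow}, but replace every occurrence of full sorting and full merging by their $u$-sorted analogues, with $u$ chosen so that the bottleneck disappears. Concretely, I would break $\Arr$ into the (at most) $2k-1$ atomic intervals $B_1, \ldots, B_m$ determined by the endpoints of $I_1, \ldots, I_k$, exactly as before. Instead of sorting each $B_i$, I $u$-sort it, for a parameter $u$ to be fixed below; by the first lemma on $u$-sorting, this costs $O(|B_i| \log u)$ comparisons, and summing over all atomic intervals gives $O(n \log u)$ total. I then build the same balanced binary tree $T$ of height $O(\log k)$ with the $B_i$ as leaves, and in a bottom-up sweep merge the $u$-sorted arrays at children into a $u$-sorted array at each parent, using the merging lemma above, which costs only $O(\,|X|+|Y|\,)$ per merge. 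Hence each level of $T$ costs $O(n)$ comparisons and all $O(\log k)$ levels cost $O(n \log k)$ in total; together with the $u$-sorting of the leaves this is $O(n \log u + n \log k)$ for the preprocessing.

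Next, to answer a single query $I_j$, I extract the $O(\log k)$ canonical nodes of $T$ whose $u$-sorted arrays partition $I_j$. (One subtlety: a node's $u$-sorted array represents its whole subtree, and $I_j$ may clip the two boundary atomic intervals; I handle the at most two clipped atomic leaves separately, either by a direct selection in $O(|B_i|)$ time or by noting $|B_i|$ is absorbed in what follows.) I feed these $\ell = O(\log k)$ $u$-sorted arrays, together with the target rank $k_j = \lceil |I_j|/2 \rceil$, into the algorithm of \thmref{sorted:median:3}. In $O(\ell \log(n/\ell)) = O(\log k \log n)$ time this returns subintervals $B_1', \ldots, B_\ell'$ and a rank $k_j'$ with $\sum_i |B_i'| = O(\ell \cdot (n/u)) = O((\log k)\, n / u)$, such that the $k_j'$-th element of $\bigcup_i B_i'$ is exactly the sought median. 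I then simply run a linear-time selection (the classical \cite{bfprt-tbs-73} algorithm) on the union $\bigcup_i B_i'$, costing $O(\sum_i |B_i'|) = O((\log k)\, n/u)$ comparisons. Over all $k$ queries this selection step costs $O(k (\log k)\, n/u)$, and the \thmref{sorted:median:3} invocations cost $O(k \log k \log n)$.

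Adding the pieces, the total comparison count is
\[
O\pth{ n \log u + n \log k + k \log k \log n + \frac{k n \log k}{u} }.
\]
The main obstacle is choosing $u$ to balance the $n \log u$ preprocessing term against the $k n (\log k)/u$ query-selection term; taking $u = \Theta(k)$ makes the last term $O(n \log k)$ and the first term $O(n \log k)$ as well, so everything collapses to $O(n \log k + k \log k \log n)$, as required. (If $k > n$ one can first discard duplicate queries or cap $u$ at $n$; since $k \le n$ is the interesting regime and $k \log k \log n$ already dominates when $k$ is large, this causes no trouble.) The remaining work is routine: verifying that the $u$-sorted array stored at each tree node indeed has $|\List{\cdot}| \le 20u$ after the merges — this is guaranteed by the filtering step at the end of the merging lemma, which caps the list size at $2u+1$ — and checking that the canonical-node decomposition of each query interval together with the two clipped boundary leaves correctly covers $I_j$ and that ranks are tracked additively. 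Combining all of the above yields the statement of \lemref{main}.
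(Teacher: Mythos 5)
Your proposal is correct and follows essentially the same route as the paper: replace sorting by $u$-sorting in the slow algorithm, merge up the tree in linear time per level, answer each query via \thmref{sorted:median:3} followed by a linear-time selection on the surviving intervals, and choose $u$ polynomial in $k$ to balance the terms (the paper takes $u=k^2$, your $u=\Theta(k)$ works equally well). The worry about clipped boundary leaves is unnecessary here, since in the offline setting the atomic intervals are defined by the query endpoints and every $I_j$ is exactly a union of them, but it does no harm.
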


\begin{proof}
    We repeat the algorithm of \secref{slow} using $u$-sorting instead
    of sorting, for $u$ to be specified shortly. Building the
    data-structure (i.e., the tree over the atomic intervals) takes
    $O( n \log u)$ comparisons. Indeed, we first $u$-sort the atomic
    intervals, and then we merge them as we go up the tree.

    A query of finding the median of array elements in an interval is
    now equivalent to finding the median for $m = O(\log k)$
    $u$-sorted arrays $A_1, \ldots, A_m$. Using the algorithm of
    \thmref{sorted:median:3} results in $m$ intervals $B_1, \ldots,
    B_m$ that belong to $A_1, \ldots, A_m$, respectively, such that we
    need to find the $k'$\th smallest element in $B_1 \cup \ldots \cup
    B_m$. The total length of the $B_i$s is $O( m n/u)$.  Now we can
    just use the brute force method. Merge $B_1, \ldots, B_m$ into a
    single array and find the $k'$\th smallest element using the
    classical algorithm. This take $ O( m n/u )$ comparisons. We have
    to repeat this $k$ times, and the number of comparisons we need is
    \[
    O\pth{ k m \frac{n}{u} + k m \log n } = O(n + k \log k \log n ),
    \]
    for $u = k^2$, since $m= O(\log k)$.  Thus, in all, the number of
    comparisons using by the algorithm is $O(n \log k + k\log k \log
    n)$.
\end{proof}

\medskip 

We can extend this bound to the case when the intervals are presented
in an online manner, and we get amortized bounds.

\begin{lemma}[When $k$ is known in advance.]
    There is a deterministic algorithm to solve the \KMed problem in
    $O( n \log k + k \log k \log n )$ time, when the $k$ query
    intervals are provided in an online fashion, but $k$ is known in
    advance.

    \lemlab{main:2}
\end{lemma}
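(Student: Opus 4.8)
The plan is to follow the proof of \lemref{main}, but to replace the query-dependent decomposition of $\Arr$ into atomic intervals by a \emph{fixed} decomposition into $k$ equal-length blocks, which we can afford precisely because $k$ is known in advance. First, before any query arrives, I would partition $\Arr[1..n]$ into $k$ consecutive blocks $C_1, \ldots, C_k$, each of length $n/k$, $u$-sort each block with $u = k^2$, and build a balanced binary tree $T$ with $C_1, \ldots, C_k$ as its leaves; at each internal node I would store the $u$-sorted array obtained by merging the $u$-sorted arrays of its two children. By the $u$-sorting lemma above, $u$-sorting one block costs $O\pth{(n/k)\log u} = O\pth{(n/k)\log k}$, hence $O(n \log k)$ over all the leaves, and by the merging lemma above each of the $O(\log k)$ levels of $T$ costs $O(n)$ comparisons; so the whole structure is built with $O(n\log k)$ comparisons, with no dependence on the (future) queries.

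To answer a query interval $I_j = [l_j, r_j]$ arriving online: if $l_j$ and $r_j$ lie in the same block, then $\cardin{I_j} \le n/k$ and I would find its median directly with $O(n/k)$ comparisons using the classical linear-time selection algorithm. Otherwise, every block strictly between the first and last block touched by $I_j$ is fully contained in $I_j$, and these full blocks are covered by $m = O(\log k)$ canonical nodes of $T$, each holding a $u$-sorted array; call these arrays $A_1, \ldots, A_m$. The only remaining parts of $I_j$ are the two partial pieces at its ends, of total size at most $2n/k$; I would $u$-sort each of them from scratch, at cost $O\pth{(n/k)\log k}$, and add them to the collection. Finding the median of $I_j$ is then exactly the problem of finding the element of rank $k' = \ceil{\cardin{I_j}/2}$ in the union of $\ell = m + 2 = O(\log k)$ $u$-sorted arrays of total size at most $n$. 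I would apply \thmref{sorted:median:3}, spending $O\pth{\ell \log(n/\ell)} = O(\log k \log n)$ comparisons to obtain subintervals $B_1, \ldots, B_\ell$ of total size $O\pth{\ell \cdot n / u} = O\pth{n \log k / k^2}$ together with a rank, so that the sought element is the element of that rank in $B_1 \cup \cdots \cup B_\ell$; then I would merge the $B_i$'s into one array and extract that element with the classical selection algorithm, at a further cost of $O\pth{n\log k/k^2 + \log k\log n}$ comparisons. Thus a single query is answered with $O\pth{(n/k)\log k + \log k \log n}$ comparisons.

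Summing, the preprocessing uses $O(n\log k)$ comparisons and the $k$ online queries together use $O(n\log k + k\log k\log n)$ comparisons, which is the claimed bound. The one point I expect to require care --- and the main, though mild, obstacle --- is controlling the cost charged to the two partial end pieces of a query: $u$-sorting them from scratch on each query is affordable exactly because each such piece has size $O(n/k)$ and $\log u = O(\log k)$, so the $O\pth{(n/k)\log k}$ per-query overhead sums to $O(n\log k)$ over all $k$ queries and stays within budget. This crucially uses that $k$, and hence the block size $n/k$ and the parameter $u = k^2$, are fixed in advance, which is precisely the hypothesis of the lemma; everything else is a direct reuse of the analysis of \lemref{main}, with the fixed block decomposition playing the role of the atomic-interval decomposition.
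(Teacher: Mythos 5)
Your proposal is correct and follows essentially the same route as the paper: a fixed equal-length partition built in advance, a balanced tree of $u$-sorted merges with $u=k^2$, and per-query handling of the two partial end blocks by $u$-sorting them on the fly. The only difference is that the paper uses $k^2$ atomic intervals of length $n/k^2$ (making the end-piece cost $O(n\log k/k)$ in total), whereas you use $k$ blocks of length $n/k$, which makes that cost $O(n\log k)$ in total --- still within the claimed budget, as you correctly verify.
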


\begin{proof}
    The idea is to partition the array into $u$, $u\leq k^2$ atomic intervals
    all of the same length, and build the data-structure of these
    atomic intervals. The above algorithm would work verbatim, except
    for every query interval $I$, there would be two ``dangling''
    atomic intervals that are of size $n/u$ that contain the two
    endpoints of $I$. 

    Specifically, to perform the query for $I$, we compute $m = O(\log
    k)$ $u$-sorted arrays using our data-structure.  We also take
    these two atomic intervals, clip them into the query interval,
    $u$-sort them, and add them to the $m$ $u$-sorted arrays we
    already have. Now, we need to perform the median query over these
    $O( \log k)$ $u$-sorted arrays, which we can do, as described above.
    Clearly, the resulting algorithm has running time
    \[
    O\pth{ n \log u + k \log u \log n + k \frac{n}{u} \log u } =
    O\pth{ n \log k + k \log k \log n},
    \]
    since $u= k^2$.
\end{proof}

\begin{lemma}[When $k$ is \emph{not} known in advance.]
    There is a deterministic algorithm to solve the \KMed problem in
    $O( n \log k + k \log k \log n )$ time, when the $k$ query
    intervals are provided in an online fashion.

    \lemlab{main:3}
\end{lemma}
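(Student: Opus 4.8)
The plan is to remove the assumption that $k$ is known in advance by a standard doubling (guessing) trick layered on top of Lemma~\ref{lemma:main:2}. We maintain a current guess $\kappa$ for the number of queries, starting at some small constant, and run the algorithm of Lemma~\ref{lemma:main:2} as though $k=\kappa$: that is, we partition $\Arr$ into $u=\kappa^2$ equal-length atomic intervals, $u$-sort them, and build the merge tree of height $O(\log\kappa)$ over them. As long as the number of queries seen so far is at most $\kappa$, each query is answered exactly as in Lemma~\ref{lemma:main:2}, in $O\!\left(\frac{n}{\kappa}\log\kappa + \log\kappa\log n\right)$ amortized time, and the one-time cost of building the structure for this phase is $O(n\log\kappa)$.

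When the $(\kappa+1)$st query arrives, we \emph{rebuild} from scratch with the doubled guess $2\kappa$: re-partition into $(2\kappa)^2$ atomic intervals, re-$u$-sort, and rebuild the tree, at cost $O(n\log\kappa)$. We then continue until the guess is exceeded again, and so on. After processing all $k$ queries, the final guess $\kappa$ satisfies $\kappa < 2k$, so every individual query was answered under a guess that is $O(k)$, giving per-query cost $O\!\left(\frac{n}{k}\log k + \log k\log n\right)$ and hence $O(n\log k + k\log k\log n)$ over all $k$ queries. The rebuild costs form a geometric series: the guesses used are $2,4,8,\ldots,\kappa$ with $\kappa=O(k)$, and rebuilding at guess $2^i$ costs $O(n\cdot i)=O(n\log 2^i)$, so the total rebuild cost is $\sum_{i=1}^{O(\log k)} O(n\, i) = O(n\log^2 k)$. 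This is too large if we are not careful, so the refinement is to charge rebuild cost against query \emph{arrivals}: the rebuild at guess $2\kappa$ is triggered by the $(\kappa+1)$st query, so we charge its $O(n\log\kappa)$ cost to the $\kappa$ queries in the just-completed phase, i.e.\ $O\!\left(\frac{n\log\kappa}{\kappa}\right)=O\!\left(\frac{n\log k}{\kappa}\right)$ extra amortized time per query in that phase; summed over a phase this is $O(n\log k)$, and summed over the $O(\log k)$ phases with geometrically increasing sizes it is again $O(n\log k)$ total (the phase sizes $\kappa$ increase geometrically to $k$, so $\sum_{\text{phases}} n\log k = O(n\log k)$ since there are $O(\log k)$ phases each contributing $O(n\log k)$ — more precisely, one should sum the per-query charges, $\sum_{\text{queries }q} \frac{n\log k}{\kappa(q)}$ where $\kappa(q)\ge$ (index of $q$)$/2$, giving $O(n\log k \cdot \log k)$; to avoid the extra log, charge instead each rebuild to the \emph{new} queries it enables, spreading $O(n\log\kappa)$ over the next $\kappa$ queries). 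Either way, a slightly more careful amortization — charging the cost of building the structure for guess $2\kappa$ to the $\kappa$ queries numbered $\kappa+1,\ldots,2\kappa$ — gives $O(n\log k /\kappa)$ per query over a window of $\kappa$ queries whose total is $O(n\log k)$, and over all windows $O(n\log k)$, as desired.

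The main obstacle is exactly this amortization bookkeeping: a naive "rebuild and charge to the current phase" argument leaks an extra $\log k$ factor, so one must charge the rebuild cost \emph{forward} to the queries that will be served by the new structure (of which there are at least $\kappa$ before the next rebuild), making the amortized rebuild overhead $O\!\left(\frac{n\log\kappa}{\kappa}\right)$ per query; since $\kappa$ only ranges over powers of two up to $O(k)$, summing $O\!\left(\frac{n\log\kappa}{\kappa}\right)\cdot\kappa = O(n\log\kappa)$ over the $O(\log k)$ phases still telescopes to $O(n\log k)$ because $\sum_{i\le \log k} n\cdot i$ would be $O(n\log^2 k)$ — so in fact the clean fix is to note that the dominant phase is the last one, of length $\Theta(k)$, contributing $O(n\log k)$, and all earlier phases together process fewer than $k$ queries with a smaller-or-equal guess, contributing $O(n\log k)$ in total as well. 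Once the accounting is set up so that the final guess is $\Theta(k)$ and rebuild cost is geometric, the bound $O(n\log k + k\log k\log n)$ follows, which proves the lemma, and combined with the lower bound of the previous section completes the proof of Theorem~\ref{thm:main}.
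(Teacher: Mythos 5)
Your overall plan --- a guess-and-rebuild scheme layered on top of \lemref{main:2} --- is the same as the paper's, but your choice of \emph{doubling} the guess breaks the bound, and your attempted repairs do not save it. With guesses $\kappa_i = 2^i$, the rebuild for phase $i$ genuinely costs $\Theta(n\log \kappa_i) = \Theta(n\cdot i)$ comparisons, so the total rebuild work over the $\Theta(\log k)$ phases is $\sum_{i=1}^{\Theta(\log k)} \Theta(n\, i) = \Theta(n\log^2 k)$. You notice this, but then try to fix it by re-charging the cost to queries (backward to the old phase, or forward to the new one). Amortization is only bookkeeping: it redistributes cost, it cannot reduce the total number of comparisons actually performed, which under doubling really is $\Theta(n\log^2 k)$ in the worst case. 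Your final ``clean fix'' --- that all earlier phases together contribute only $O(n\log k)$ --- is simply false under doubling: the last $\Theta(\log k)$ phases each incur a rebuild costing $\Theta(n\log k)$ up to constant factors (e.g.\ the phase with guess $\sqrt{k}$ already pays $\Theta(n\log k)$ for its rebuild), and these do not telescope.

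The missing idea is to grow the guess much more aggressively: the paper sets $k_1 = O(1)$ and $k_i = (k_{i-1})^2$, i.e.\ it \emph{squares} the guess at each rebuild. Then $\log k_i = 2\log k_{i-1}$, so the per-phase rebuild costs $O(n\log k_i)$ form a geometric series dominated by the last term, giving $\sum_i O(n\log k_i) = O(n\log k)$ over the $\beta = O(\log\log k)$ phases; the query-side terms likewise sum to $O(k\log k\log n)$ since every guess used satisfies $\log k_i = O(\log k)$. (Also note the per-query time under guess $\kappa$ should be measured against the actual number of queries served in that phase, not against $\kappa$; this is harmless in the paper's accounting but your per-query expression $O\bigl(\frac{n}{\kappa}\log\kappa + \log\kappa\log n\bigr)$ quietly assumes the phase is full.) With squaring in place of doubling, the rest of your argument is unnecessary and the lemma follows directly.
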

\begin{proof}
    We will use the algorithm of \lemref{main:2}.

    At each stage, we have a current guess to the number of queries to
    be performed. In the beginning this guess is a constant, say
    10. When this number of queries is exceeded, we \emph{square} our
    guess, rebuild our data-structure from scratch for this new guess,
    and continue. Let $k_1 = 10$ and $k_i = (k_{i-1})^2$ be the
    sequence of guesses, for $i=1,\ldots, \beta$, where $\beta =
    O(\log \log k )$. We have that the total running time of the
    algorithm is
    \[
    \sum_{i=1}^\beta O\pth{ n \log k_i + k_i \log k_i \log n}
    =O( n \log k + k \log k \log n ),
    \]
    since $\log k_{i-1} = (\log k_i)/2$, for all $i$.
\end{proof}

\medskip

\lemref{main:3} implies the algorithmic part of \thmref{main}.


\section{Concluding Remarks}

The \KMed problem is a natural interval generalization of the
classical median finding problem: unlike interval generalizations of
other problems such as $\max$, $\min$ or sum which can be solved in
linear time, our problem (surprisingly) needs $\Omega(n\log k)$
comparisons, and we present an algorithm that solves this problem with
running time (and number of comparisons) $O(n \log k + k\log k \log
n)$. A number of technical problems remain and we list them below.
\begin{itemize}
    \item Currently, our algorithm uses $O(n \log k)$ space. It would
    be interesting to reduce this to linear space.

    \item Say the elements are from an integer range $1,\ldots,U$. Can
    we design $o(n)$ time algorithms in that case using word
    operations? For the classical median finding problem, both
    comparison-based and word-based algorithms take $O(n)$ time. But
    given that the comparison-based algorithm needs $\Omega(n\log k)$
    comparisons for our \KMed problem, it now becomes interesting if
    word-based algorithms can do better for integer alphabet.

    \item Say one wants to only answer median queries approximately
    for each interval (see \cite{bkmt-armrm-05} for some relevant
    results). Can one design $o(n\log k)$ algorithms?

    Suppose the elements are integers in the range $1,\ldots,U$.  We
    define an approximate version where the goal is to return an
    element within $(1\pm\varepsilon)$ of the correct median in value,
    for some fixed $\varepsilon$, $0 < \varepsilon < 1$.  Then we can
    keep an \emph{exponential histogram} with each atomic interval of
    the number of elements in the range
    $[(1+\varepsilon)^i,(1+\varepsilon)^{i+1})$ for each $i$, and
    follow the algorithm outline here constructing them for all the
    suitably chosen intervals on the balanced binary tree atop these atomic
    intervals. For each interval in the query, one can easily merge
    the exponential histograms corresponding to and obtain an
    algorithm that takes time $O(n+k\log k\log U)$, since any two
    exponential histograms can be merged in $O(\log U)$ time.  If the
    elements are not integers in the range $1,\ldots,U$ and one worked
    in the comparison model, similar results may be obtained
    using~\cite{gk-seocq-01,gk-pccos-04}, or $\eps$-nets. It is not
    clear if these bounds are optimal.

    \item We believe extending the problem to two (or more) dimensions
    is also of interest.  There is prior work for range sum and
    minimums, but tight bounds for $k$ range medians will be
    interesting.
\end{itemize}

\medskip
\paragraph{Acknowledgements.} 
The authors would like to thank the anonymous referees for their
careful reading, useful comments and references.  In particular, they
identified mistakes in an earlier version of this paper.

 
\newcommand{\etalchar}[1]{$^{#1}$}

\appendix

\section{Choosing median from sorted
   arrays}
\seclab{deterministic:algorithm}

In this section, we prove \thmref{sorted:median:2} by providing a fast
deterministic algorithm for choosing the median element of $\ell$
sorted arrays. As we mentioned before, this result seems to be known,
but we are unaware of a direct reference to it, and as such we provide
a detailed algorithm.

\subsection{The algorithm}

Let $A_1, \ldots, A_\ell$ be the given sorted arrays of total size
$n$.  We maintain $\ell$ active ranges
$[l_i,r_i]$ of the array $A_i$ where the required element (i.e.,
``median'') lies, for $i=1, \ldots, \ell$. Let $k$ denote the rank of
the required median.  Let $\ncurr = \sum_i (r_i -l_i+1)$ be the total
number of currently active elements.

If $\ncurr \leq 32 \ell$, then we find the median in linear time, using
the standard deterministic algorithm.  Otherwise, let $\Delta =
\floor{\ncurr/(32 \ell)}$. Pick $u_i-1$ equally spaced elements from
the active range of $A_i$, where
\[
u_i= 4+ \ceil{ \frac{ r_i - l_i +1 }{\Delta} }.
\]
Let $L_i$ be the resulting list of representatives, for $i=1,\ldots,
\ell$. Note that $L_i$ breaks the active range of $A_i$ into blocks of
size 
\[
\nu_i \leq \ceil{\frac{r_i - l_i +1}{u_i}}.
\]
For each element of $L_i$ we know exactly how many elements are
smaller than it and larger than it in the $i$\th array. Merge the
lists $L_1,\ldots, L_\ell$ into one sorted list $L$. For an element
$x$, let $\rank(x)$ denote the rank of $x$ in the set $A_1 \cup \ldots
\cup A_\ell$.  Note, that now for every element $x$ of $L$ we can
estimate its $\rank(x)$ to lie within an interval of length $T =
\sum_{i=1}^{\ell} \nu_i$.  Indeed, we know for an element of $x \in L$
between what two consecutive representatives it lies for all $\ell$
arrays.  For element $x \in L$, let $R(x)$ denote this range where
the rank of $x$ might lie.

Now, given two consecutive representatives $x$ and $y$ in the $i$\th
array, if $k \notin R(x)$ and $k \notin R(y)$ then the required median
cannot lie between $x$ and $y$, and we can shrink the active range
not to include this portion. In particular, the new active range spans
all the blocks which might contain the median. The algorithm now
updates the value of $k$ and continues recursively on the new active
ranges.

\subsection{Analysis}

The error estimate for the rank of a representative is bounded by
\begin{eqnarray*}
    U &=& \sum_{i=1}^{\ell} \nu_i
    \leq \ell +  \sum_{i=1}^{\ell} \frac{r_i -
       l_i+1}{u_i} 
    \leq 
    \ell +  \sum_{i=1}^{\ell} \frac{r_i -
       l_i+1}{4+ { \frac{ r_i - l_i +1 }{\Delta} }} 
    = \ell +  \Delta\sum_{i=1}^{\ell} \frac{r_i -
       l_i+1}{4\Delta +  r_i - l_i +1 }
    \\
    &\leq& \ell + \ell \Delta 
    \leq \frac{\ncurr}{32} + 
    \ell \floor{\frac{\ncurr}{32 \ell}}
    \leq  \frac{\ncurr}{16},
\end{eqnarray*}
since $\ell \leq \ncurr / 32$ and by the choice of $\Delta$.

Consider the sorted merged array $B$ of all the active elements. The
length of $B$ is $\ncurr$, and assume, for the sake of simplicity of
exposition, that the desired median is in the second half of $B$ (the
other case follows by a symmetric argument). Note, that any
representative $x$ that fall in the first quarter of $B$ has a rank
that lies in a range shorter than $T < \ncurr/4$, and as such it
cannot include $k$. In particular, let $t_i$ be the index in $A_i$ of
the first representative in the active range (of $A_i$) that does not
falls in the first quarter of $B$. Observe that
$\sum_i (t_i -l_i + 1) \geq \ncurr/4$.
The total number of elements that are being eliminated by the
algorithm (in the top of the recursion) is at least
\[
\sum_i \pth{ (t_i -l_i + 1) - 2 \nu_i } \geq \sum_i \pth{ t_i -l_i +
   1} - 2\sum_i \nu_i = \frac{\ncurr}{4} - 2U \geq \frac{\ncurr}{8}.
\]
Namely, each recursive call continues on total length of all active
ranges smaller by a factor of $(7/8)$ from the original array.

The total length of $L_1 ,\ldots L_\ell$ is $O(\ell)$, and as such the
total work (ignoring the recursive call) is bounded by $O( \ell \log
\ell )$. The running time is bounded by 
\[
T( \ncurr ) = O( \ell \log \ell ) + T\pth{ (7/8) \ncurr  },
\]
where $T(\ell) = O(\ell \log \ell)$.  Thus, the total running time is
$O( \ell \log \ell \log (\ncurr/\ell) )$.

\subsection{Doing even better - a faster algorithm}

Observe, that the bottleneck in the above algorithm is the merger of
the representative lists $L_1, \ldots, L_\ell$. Instead of merging
them, we will compute the median $x$ of $L = L_1 \cup \ldots \cup
L_\ell$. If $R(x)$ does not contain $k$, then we can throw away at
least $\ncurr/4$ elements in the current active ranges and continue
recursively. Otherwise, compute the element $z$ of rank $\ncurr/4$ in
$L$. Clearly, $k \notin R(z)$ and one can throw, as above, as constant
fraction of the active ranges. The resulting running time (ignoring
the recursive call) is $O(\ell)$ (instead of $O( \ell \log \ell
)$). Thus, the running time of the resulting algorithm is $O( \ell
\log (\ncurr/\ell) )$.


\end{document}